\documentclass[conference]{IEEEtran}
\IEEEoverridecommandlockouts

\usepackage{cite}
\usepackage[utf8]{inputenc}
\usepackage[T1]{fontenc}
\usepackage{titlesec, blindtext, color}
\usepackage{amsthm}
\usepackage{amsmath}
\usepackage{tikz}
\usepackage{amssymb}
\usepackage{amsmath}
\usepackage{mathtools}
\usepackage{color,soul}
\usepackage{dsfont}
\usepackage{cancel}
\usepackage{float}
\usepackage{bbm}
\usepackage[utf8]{inputenc}
\usepackage{enumerate}
\usepackage{framed}
\usepackage{enumitem,color,bbm}

\def\argmin{\mathop{\rm argmin}}
\def\E{\mathbb{E}}

\def\RR{\mathbb{R}}

\newcommand{\scribedby}[1]{}

\newcommand{\mi}{{\textnormal{mid}}}

\newtheorem{lemma}{Lemma}
\newtheorem{theorem}{Theorem}
\newtheorem{remark}{Remark}

\newcommand{\I}{\mathcal{I}}
\newcommand{\R}{\mathcal{R}}
\newcommand{\RI}{\R_{\I}}
\newcommand{\Ball}[2]{\mathbb{B}_{#1}\left(#2\right)}
\newcommand{\bpi}{\boldsymbol{\pi}}
\newcommand{\bp}{\boldsymbol{p}}
\newcommand{\low}{\textnormal{lower}}

\title{
Estimation with Quantized Parameter Side-Information}

\author{\IEEEauthorblockN{Mathis Wetterwald,
\IEEEauthorblockA{LTCI, Telecom Paris IP Paris\\
91120 Palaiseau, France\\
mathis.wetterwald@telecom-paris.fr}}
\and
\IEEEauthorblockN{Ofer Shayevitz,
\IEEEauthorblockA{Tel-Aviv University \\
Tel Aviv, Israel \\
ofersha@eng.tau.ac.il}}
\and
\IEEEauthorblockN{Mich\`ele Wigger,
\IEEEauthorblockA{LTCI, Telecom Paris IP Paris\\
91120 Palaiseau, France\\
michele.wigger@telecom-paris.fr}}}

\date{}

\begin{document}
\maketitle

\begin{abstract}
This paper presents upper and lower bounds on the minimax risk of a parameter estimation problem where the estimator not only observes independent and identically distributed observations but also a quantized version of the parameter, where the quantization has a limitation on the number of levels, but can be optimized otherwise. Our upper and lower bounds have similar decay rates for large number of observation samples and quantization levels.
\end{abstract}

\section{Introduction}
In the classical setting of parameter estimation, one is given $n$ independent and identically distributed (i.i.d.) samples $X^n$ from some unknown distribution $P_\theta$, that belongs to some known parametric family $\{P_\theta\}_{\theta\in \mathcal{J}}$, where in this paper we will consider $\mathcal J$ a bounded interval of $\mathbb R$. The goal is then to provide an estimate $\hat\theta = \hat\theta(X^n)$ for the true underlying parameter $\theta$, while guaranteeing that the guess is optimal in some sense. Typically, the quality of the estimation is measured with respect to  a nonnegative loss function $\ell:\mathcal{J}\times \mathcal{J}\to \RR_+$ (often taken to be quadratic), and in particular by the expected loss, or statistical \textit{risk}, $R_\theta(\hat\theta) = \E_{\theta}[\ell(\theta, \hat\theta)]$. Then, one is usually interested in either 1) uniform bounds on the minimal possible risk of \textit{unbiased estimators} (i.e., satisfying $\E_\theta[\hat\theta] = \theta$), 2) the \textit{Bayesian estimators} that minimize the expected risk $\E_{\theta\sim \pi} [R_\theta(\hat\theta)]$ over some prior $\theta$ on the parameter space, or 3) \textit{minimax estimators} that minimize the worst case risk over the parameter space, i.e., uniformly guaranteeing the minimax risk $R^* = \min_{\hat\theta}\max_{\theta\in \mathcal J} R_\theta(\hat\theta)$. The problem of parameter estimation under these paradigms has been extensively studied in the literature, and is by now very well understood \cite{van2013detection,lecam1986asymptotic,polyanskiy2025information}. Below, we limit our discussion to the minimax setting. 

In this paper, we consider a case where the estimating agent is given access to a quantized version of the parameter $\theta$, prior to seeing the samples. In particular, suppose that the agent can evaluate a $k$-level quantizer $g:\mathcal{J}\to \{1,\ldots, k\}$ of its choosing, and then, after seeing the samples $X^n$ from $P_\theta$, should provide an estimate $\hat\theta(X^n,g(\theta))$. What is the best quantizer $g$ the agent can choose, and what is the effect on the minimax risk? When there are no samples, then this question pertains to simple minimax quantization of the parameter space. For instance, if $\mathcal J=[0,1]$ and under quadratic loss, the optimal quantization is simply uniform and the minimax risk is $1/k^2$. Below, we characterize the minimax risk in the case where $n$ samples from $P_\theta$ are given. It turns out that the minimax risk still improves roughly by a factor of $1/k^2$, but the (essentially) optimal quantizer is somewhat different; it is essentially cyclically uniform with a cycle that corresponds to the standard deviation of the no-quantizer estimation error. 

\textbf{Related work}. Settings of a related flavor have been considered by various authors, especially in the context of distributed parameter estimation with rate constraints, where samples governed by the same parameter are observed by remote parties. Such works include Alswhede and Burnashive \cite{AhlswedeBurnashev90}, Han and Amari \cite{HanAmari95}, Berger and Zhang \cite{zhang2002estimation}, Zhang et al \cite{zhang2013information}, Braverman et al \cite{braverman2016communication}, Hadar and Shayevitz \cite{hadar2019distributed}, and Hadar et al \cite{hadar2019communication}. Loosely speaking, the theoretical difficulties tackled by these works stem either from the fact that each terminal has a small number of samples, or by the fact that the parameter determines the relation between remote samples and cannot be estimated locally. Our work focuses on a somewhat different regime, where one terminal has direct access to the parameter but needs to efficiently encode (quantize) this information to best help the local estimation of the other. 

\section{Problem Setup and Main Result}

Consider a bounded interval $\mathcal{J}\subseteq \mathbb{R}$ and for each $\theta\in \mathcal{J}$ a pmf $P_{\theta}$ over a discrete finite set $\mathcal{X}$. For ease of notation, let $\mathcal{X}=\{1,\ldots, m\}$, for $m$ a given positive integer.  

We consider the two-terminal setup in Figure~\ref{fig:1}. 
\begin{figure}[ht!]
\begin{center} 
\begin{tikzpicture}
    \node at (0,0) [rectangle,draw,ultra thick] (help) {Helper};
    \node at (3,0) [rectangle,draw,ultra thick] (estim) {Estimator};
    \node at (0,1) (theta) {$\theta$};
    \node at (3,1) (Xn) {$X_n$ i.i.d. $\sim P_{\theta}$};
    \node at (3,-1) (hat) {$\hat\theta$};
    \draw[->,thick] (theta) -- (help.north);
    \draw[->,thick] (Xn) -- (estim.north);
    \draw[->,thick] (help.east) -- node[midway,above] {g($\theta$)} (estim.west);
    \draw[->,thick] (estim.south) -- (hat);
    
\end{tikzpicture}
\vspace{0cm}
\caption{Parameter estimation with coded side-information from a perfectly informed helper.}
\label{fig:1}
\end{center}
\end{figure}
A helper perfectly observes the parameter $\theta\in\mathcal{J}$ that is to be estimated at the estimator. To facilitate the estimator's task, the helper can send a $k$-valued message $M$, for $k\geq 2$,  to the estimator
\begin{equation} 
M= g^{(n,k)}(\theta) \in \{1, \ldots, k\}.
\end{equation}
The estimator observes $n$ independent and identically distributed drawings $X^n=(X_1,\ldots, X_n)$ of the pmf $P_{\theta}$. Based on $X^n$ and $M$ it guesses the parameter $\theta$: 
\begin{equation} 
\hat{\theta}= \phi^{(n,k)} \left(M, X^n\right).
\end{equation} 

We are interested in characterizing the best possible Bayes risk over all helper strategies $g^{(n,k)}$ and estimators $\phi^{(n,k)}$:
\begin{align}
    R_{n,k}^\star :=  \min_{g^{(n,k)}, \phi^{(n,k)}}   \max_{\theta}\E_{P_{\theta}}\left[\left(\theta - \hat{\theta}\right)^2\right].
\end{align}

We shall need the following assumptions.
\begin{itemize} 
\item \textit{Assumption 1:} There exists a real number $D>0$ such that for sufficiently close $\theta, \theta' \in \mathcal{J}$: 
\begin{equation} 
\| P_{\theta} - P_{\theta'}\|_1 \leq D|\theta - \theta'|. 
\end{equation} and there exists an open interval $\mathcal{I}$ such that 
\begin{equation} 
p_{0,\mathcal{I}} := \inf_{\theta \in \mathcal{I}} \min_{x \in \mathcal{X}} P_\theta(x) >0.
\end{equation}
\item \textit{Assumption 2:} 
There exists a real number $F>0$ such that for $\theta, \theta' \in \mathcal{J}$, if $P_\theta$ and $P_{\theta'}$ are sufficiently close: 
\begin{equation} 
\| P_{\theta} - P_{\theta'}\|_1 \geq F |\theta - \theta'|. 
\end{equation} 
\end{itemize} 
Our main result are  upper and lower bounds on    $R_{n,k}^\star$ under above assumptions. Notice that these assumptions are in particular satisfied for the canonical problem of $P_{\theta}$ the Bernoulli-$\theta$ distribution and $\mathcal{J}=[0,1]$.
\begin{theorem}[Converse]\label{thm:1}
Under above Assumption~1, 
 for sufficiently large values of $n$:
\begin{equation} 
C_{\low} \cdot \frac{ 1}{n k^{2}} \leq   R_{n,k}^\star,
\end{equation} 
where 
\begin{equation}
C_{\low} = \frac14(m-1)^{1-m} e^{-\frac{(1+D)^2}{p_{0,\mathcal{I}}}-1}m^{\frac m2}\frac1{(2\pi)^{\frac{m-1}2}}.
\end{equation}
\end{theorem}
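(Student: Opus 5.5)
We lower bound the minimax risk by a Bayes risk with a uniform prior on a short subinterval of $\mathcal{I}$. We then exploit the fact that the quantizer $g$ can only split the prior into $k$ cells, so on average a cell carries prior mass $1/k$. Concretely, fix a subinterval $[a,a+L]\subseteq\mathcal{I}$ with $L = c/\sqrt{n}$ for a constant $c$ to be tuned. The constant $e^{-(1+D)^2/p_{0,\I}}$ suggests the choice $L\approx (1+D)/\sqrt{n}$ up to scaling. Put the uniform prior $\pi$ on it, so that $R^\star_{n,k}\ge \min_{g,\phi}\E_{\theta\sim\pi}\E_\theta[(\theta-\hat\theta)^2]$. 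Write $A_j=g^{-1}(j)\cap[a,a+L]$; these sets partition the interval, with Lebesgue measures $|A_j|$ summing to $L$. Conditioned on $M=j$, the problem becomes Bayes estimation with prior uniform on $A_j$ and observations $X^n$.

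The key step is a per-cell lower bound. Because $X^n$ is summarized by its type $T\in\mathcal{P}_n(\mathcal{X})$, the posterior on $A_j$ given $T$ has density proportional to $P_\theta^n(T)=|T|\prod_x P_\theta(x)^{nT(x)}$. I would lower bound $P_\theta^n(T)$ uniformly over $\theta$ in the interval and over a set of "typical" types. Take $T$ within $O(1/\sqrt n)$ of $P_{\theta_0}$, for $\theta_0$ the midpoint. Using Assumption 1 ($\|P_\theta-P_{\theta_0}\|_1\le DL$) and $P_\theta(x)\ge p_{0,\I}$, a $\chi^2$-type bound $D(T\|P_\theta)\le \|T-P_\theta\|_1^2/p_{0,\I}$ gives $P^n_\theta(T)\ge \frac{1}{\text{poly}}e^{-(1+D)^2/p_{0,\I}}$. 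The precise polynomial factor comes from the Stirling/local CLT estimate of the type-class probability: $|T|\,\prod_x T(x)^{nT(x)}\gtrsim (2\pi n)^{-(m-1)/2}\prod_x T(x)^{-1/2}$. Combined with $\prod_x T(x)\le m^{-m}$ and a count of the number of lattice types in the ball, this yields the factors $m^{m/2}(2\pi)^{-(m-1)/2}(m-1)^{1-m}$. The conclusion I aim for is that the likelihood ratio $P^n_\theta(T)/P^n_{\theta'}(T)$ is bounded between constants on the interval for a set of types of total probability bounded below by a constant.

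From bounded likelihood ratios, the posterior on $A_j$ given a typical $T$ has density within a constant factor $\kappa$ of uniform. Since the uniform distribution on a set of measure $s$ has variance at least $s^2/12$ (an interval minimizes it), any estimator incurs conditional risk at least $\kappa\,|A_j|^2/12$. Averaging over $j$ with weights $|A_j|/L$ gives
\[
\E[(\theta-\hat\theta)^2]\ \ge\ \frac{\kappa'}{L}\sum_j |A_j|^3\ \ge\ \kappa'\frac{L^2}{k^2},
\]
by Jensen/Hölder: $\sum_j |A_j|^3\ge L^3/k^2$. Substituting $L\asymp 1/\sqrt n$ gives $C_{\low}/(nk^2)$. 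Here "sufficiently large $n$" ensures that the interval lies in $\I$ and that Assumption 1's locality applies.

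The main obstacle is the middle step, which needs a clean, nonasymptotic, uniform lower bound on the type-class probabilities $P^n_\theta(T)$ with explicit constants. That bound has to hold simultaneously for all $\theta$ in the interval and enough types to carry constant probability. I would first try the standard bound $|T|\ge (n+1)^{-(m-1)}e^{nH(T)}$ and see whether it suffices. It likely does not, because it loses polynomial factors that would kill the $1/n$ rate. I would therefore switch to an explicit Stirling-based bound, which is what produces the $(2\pi)^{-(m-1)/2}$ and $m^{m/2}$ terms in $C_{\low}$.
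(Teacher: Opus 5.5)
Your argument is sound for the order $1/(nk^2)$, and it takes a genuinely different route from the paper. The paper runs a two-point Le Cam bound. By pigeonhole, some helper cell inside a window of length $n^{-1/2}$ contains $\theta_1,\theta_2$ at distance $\ge 1/(k\sqrt{n})$. Lemma~1 then supplies a count of types near the window and a Stirling--Robbins pointwise bound $P_\theta(\bpi)\ge C n^{-(m-1)/2}$, which together give $R^\star_{n,k}\ge \frac14|\RI|P_{\min}|\theta_1-\theta_2|^2$. You replace the pigeonhole step by a uniform prior and the convexity bound $\sum_j|A_j|^3\ge L^3/k^2$, which even lower-bounds the window-averaged risk.

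Your version also simplifies the proof of the rate. Since $|T|$ cancels in $P_\theta(T)/P_{\theta'}(T)=e^{n(D(T\|P_{\theta'})-D(T\|P_\theta))}$, you get $\kappa\ge e^{-n\max_\theta\|T-P_\theta\|_1^2/p_{0,\I}}$ from $D\le\chi^2\le\|\cdot\|_1^2/p_{0,\I}$ alone. A typical set of probability at least $1-(m-1)/s^2$ follows from Chebyshev. So the rate needs no Stirling estimate or type counting, and the ``main obstacle'' you name is not an obstacle for the rate. The paper's route, in turn, needs no prior, hence no measurability of $g$ (which your $|A_j|$ and Bayes average require). Two points also give the factor $\frac14$, where a uniform prior only gives $\frac1{12}$.

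What does not go through is your final sentence, that substituting $L\asymp 1/\sqrt n$ yields $C_{\low}/(nk^2)$. Your chain yields $\frac{\kappa q}{12}\cdot\frac{L^2}{k^2}$. The factors $m^{m/2}(2\pi)^{-(m-1)/2}(m-1)^{1-m}$ cannot come from $\kappa$, since they cancel there. They enter only if you bound $q$ as in Lemma~1 by $|\RI|P_{\min}$. But then $\kappa$ adds a second factor $e^{-(1+D)^2/p_{0,\I}}$ on top of the one already inside $P_{\min}$, and $\frac1{12}$ in place of $\frac14$ costs a further factor $3$. With $L=1/\sqrt n$ you land at $\frac13 e^{-(1+D)^2/p_{0,\I}}\,C_{\low}$, strictly below the claimed constant. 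If you use Chebyshev for $q$ instead, you get $\frac{c^2}{12}\bigl(1-\frac{m-1}{s^2}\bigr)e^{-(s+cD)^2/p_{0,\I}}$, which is not comparable to $C_{\low}$ in general.

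To prove the theorem with its stated constant, skip the posterior-ratio step.
\begin{enumerate}
\item Write the Bayes risk as $\frac1L\sum_j\sum_T\int_{A_j}P_\theta(T)\,(\theta-\hat\theta_j(T))^2\,d\theta$.
\item Use $P_\theta(T)\ge P_{\min}$ pointwise. This gives $\frac{L^2}{12k^2}|\mathcal{B}|P_{\min}$ for the type set $\mathcal{B}$.
\item Take the window of length $2/\sqrt n$ centered at $\theta_0$, and let $\mathcal{B}$ be the $\ell_1$-ball of radius $1/\sqrt n$ around $P_{\theta_0}$. Every $\theta$ in the window still has $\|T-P_\theta\|_1\le(1+D)/\sqrt n$, so Lemma~1's count and pointwise bound apply unchanged, with $C$ evaluated at $p_{0,\I}$.
\end{enumerate}
You then get $\frac13(m-1)^{1-m}C\cdot\frac{1}{nk^2}=\frac43\,C_{\low}\cdot\frac1{nk^2}$.
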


\begin{theorem}[Achievability]\label{thm:2}
Under Assumptions~1 and 2, if $k>\frac{D+F}{2F}$ and for $n$ large enough:
\begin{equation}\label{eq:U1}
   R_{n,k}^\star  \leq \frac{8m|\mathcal J|^2}{nk^2}+\frac{m^2\ln(nk^2)}{n\left((k-1/2)F-D/2\right)^2}.
\end{equation}

Moreover, if $\mathcal{J}=[0,1]$ and $P_\theta$ is Bernoulli-$\theta$, then  for $k>2$:
\begin{equation}\label{eq:U2}
   R_{n,k}^\star  \leq \frac{321\ln(k^2)}{n(k/2-1)^{2}} .
\end{equation}
\end{theorem}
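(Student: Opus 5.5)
The plan is to build an explicit scheme: a cyclic quantizer followed by a two-stage estimator. The first stage is a crude estimate from $X^n$ alone. The second stage uses $M$ to select, among the points carrying label $M$, the one nearest the crude estimate.

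\textbf{Quantizer.} Fix a cell width $\delta>0$, to be chosen of order $\sqrt{\ln(nk^2)/n}$ divided by a constant. Partition $\mathcal J$ into consecutive intervals of length $\delta$ and label them cyclically by $1,2,\ldots,k,1,2,\ldots$. Set $g(\theta)$ to be the label of the interval containing $\theta$. Intervals sharing a label are thus spaced $k\delta$ apart.

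\textbf{Crude estimate.} Let $\hat P$ be the empirical pmf of $X^n$. Define $\tilde\theta$ as any minimizer of $\|P_{\theta'}-\hat P\|_1$ over $\theta'\in\mathcal J$; a near-minimizer suffices. By the triangle inequality, $\|P_{\tilde\theta}-P_\theta\|_1\le 2\|\hat P-P_\theta\|_1$. Assumption~2, used on the event that these distributions are close, then gives $|\tilde\theta-\theta|\le \tfrac{2}{F}\|\hat P-P_\theta\|_1$. A Hoeffding bound per coordinate, combined with a union bound over the $m$ letters, gives
\begin{equation*}
\Pr\bigl(\|\hat P-P_\theta\|_1>t\bigr)\le 2m\,e^{-2nt^2/m^2}.
\end{equation*}

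\textbf{Decoding.} The estimator takes the interval with label $M$ nearest to $\tilde\theta$ and outputs its midpoint $\hat\theta$. If $|\tilde\theta-\theta|<(k-1/2)\delta$ roughly, the selected interval is the correct one, and the squared error is at most $\delta^2/4$. Without the $D$ correction, this would contribute $\tfrac{m|\mathcal J|^2}{nk^2}$-type terms only through the choice of $\delta$.

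\textbf{Risk split.} I write the risk as the contribution from the good event plus $|\mathcal J|^2$ times the probability of the bad event. The good event contributes at most $\delta^2/4$. On the bad event the error is at most $|\mathcal J|$. For the bad event I need a threshold $t$ with $\tfrac{2t}{F}\lesssim (k-\tfrac12)\delta$. The $D$ term refines this margin: the true $\theta$ can sit anywhere in its cell, and Assumption~1 (Lipschitz upper bound) converts $P$-distance back to parameter distance at the cell edges. This yields an effective margin $\bigl((k-1/2)F-D/2\bigr)\delta$ in $\ell_1$ units, which is where the hypothesis $k>\tfrac{D+F}{2F}$ comes from (it makes the margin positive). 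I then choose $t$ so that $2m e^{-2nt^2/m^2}\le 1/(nk^2)$, i.e. $t^2\approx \tfrac{m^2\ln(nk^2)}{2n}$. This makes the bad-event contribution at most $|\mathcal J|^2/(nk^2)$, absorbed into the first term of \eqref{eq:U1}.

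\textbf{Optimizing $\delta$.} Two requirements compete. Covering cyclically needs $\delta$ to be at least $\sim t/\bigl((k-1/2)F-D/2\bigr)$, which gives the second term $\tfrac{m^2\ln(nk^2)}{n((k-1/2)F-D/2)^2}$. I will also allow $\delta\le |\mathcal J|/k$ (plain uniform quantization when $n$ is small relative to $k$), producing the $\tfrac{8m|\mathcal J|^2}{nk^2}$ term after bounding. Taking $\delta$ to be the maximum of the two requirements, and summing, gives \eqref{eq:U1}. Assumption~2 only holds for close distributions; this is handled by $n$ large, using Assumption~1 and compactness so that $\tilde\theta$ is close to $\theta$ with high probability.

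\textbf{Bernoulli case \eqref{eq:U2}.} For $P_\theta$ Bernoulli-$\theta$ on $[0,1]$, I use $\tilde\theta=\bar X$ directly with Hoeffding, $\Pr(|\bar X-\theta|>s)\le 2e^{-2ns^2}$. Here $F=D=2$, so the margin becomes $(k-1)$ cells, or $(k/2-1)$ after the half-cell loss on both sides. I choose $s^2=\ln(k^2)/(2n)$ and set $\delta=s/(k/2-1)$ (up to constants), then collect constants to reach $321$.

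\textbf{Main obstacle.} The delicate step is the geometry of the decoding margin. I must show that a wrong cell with the same label is never closer to $\tilde\theta$ when the $\ell_1$ deviation is below $\bigl((k-1/2)F-D/2\bigr)\delta$, carefully combining the lower Lipschitz bound (Assumption~2) with the upper one (Assumption~1) at cell boundaries and at the ends of $\mathcal J$.
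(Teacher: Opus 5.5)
\textbf{Gap in the Bernoulli bound \eqref{eq:U2}.} Your two-event risk split cannot produce \eqref{eq:U2}. With $s^2=\ln(k^2)/(2n)$, Hoeffding gives only $\Pr(|\bar X-\theta|>s)\le 2/k^2$. Charging this event the worst-case error $|\mathcal J|^2=1$ leaves a term $2/k^2$ that does not decay in $n$.

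No other choice of $s$ fixes this. Keeping $\delta^2/4\asymp s^2/(k/2-1)^2$ below the right-hand side of \eqref{eq:U2} forces $ns^2=O(\ln k)$, so the bad-event term stays of order $k^{-O(1)}$ as $n\to\infty$. Making the bad-event term $O(1/(nk^2))$ instead forces $ns^2\ge c\ln(nk^2)$, and the resulting $\ln n$ in the good-event term eventually exceeds $321\ln(k^2)$.

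The missing idea is a layered (peeling) analysis that uses the cyclic labeling on the bad event too. Suppose $|\bar X-\theta|$ lies in the $i$-th shell $\bigl((\tfrac k2-1)\delta+(i-1)k\delta,\,(\tfrac k2-1)\delta+ik\delta\bigr]$. Then the decoder is off by at most $i$ cycles, so the squared error is at most $(\delta/2+ik\delta)^2$ rather than $1$. Meanwhile the shell has probability at most $2e^{-2n\delta^2(k/2-1+(i-1)k)^2}$. With $\delta^2=\ln(k^2)/(n(k/2-1)^2)$, the first bad shell contributes $O(\delta^2)$ and the higher shells decay super-geometrically in $i$. Hence the risk is $O(\delta^2)$ with no $\ln n$. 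This summation, done in the paper by summation by parts, is what yields the constant $321$.

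\textbf{Problem with the margin in \eqref{eq:U1}.} Your two-stage decoder does not give the margin $\bigl((k-\tfrac12)F-\tfrac D2\bigr)\delta$ you claim, and Assumption~1 plays no role in it. Because you decode in parameter space, the correct cell is guaranteed only when $|\tilde\theta-\theta|<(k-1)\delta/2$, not $(k-\tfrac12)\delta$. The reason is that $\theta$ may sit at the edge of its cell, and the gap to the next same-label cell is $(k-1)\delta$. Together with the factor $2$ in $\|P_{\tilde\theta}-P_\theta\|_1\le 2\|\hat P-P_\theta\|_1$, your good event is only $\|\hat P-P_\theta\|_1<F(k-1)\delta/4$.

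This yields a bound of the right order, about $\frac{2m^2\ln(nk^2)}{nF^2(k-1)^2}+\frac{2m|\mathcal J|^2}{nk^2}$. However, for $D=F$ (e.g.\ Bernoulli with $D=F=2$) its leading term is twice that of \eqref{eq:U1}. So for large $n$ it does not establish the stated inequality.

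The paper avoids both losses with a one-stage decoder, which picks the same-label midpoint minimizing $\|\hat P-P_{\theta^{\mi}_j}\|_1$. Assumption~1 gives $\|P_\theta-P_{\theta^{\mi}_{j_0}}\|_1\le D\delta/2$ for the true cell. Assumption~2 gives $\|P_\theta-P_{\theta^{\mi}_j}\|_1\ge(k-\tfrac12)F\delta$ for the other same-label cells. Decoding is therefore correct whenever $\|\hat P-P_\theta\|_1<\frac\delta2\bigl((k-\tfrac12)F-\tfrac D2\bigr)$, which is exactly where $D$ and $k-\tfrac12$ enter.

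Finally, the term $\frac{8m|\mathcal J|^2}{nk^2}$ is simply $|\mathcal J|^2$ times the concentration bound at that threshold. It has nothing to do with allowing $\delta\le|\mathcal J|/k$, which would contribute $|\mathcal J|^2/(4k^2)$.
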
 


\begin{remark} 
Ignoring the helper information allows the estimator to achieve a minimax risk in the order of $1/n$. Ignoring the data, a simple helper strategy (partitioning the parameter space $\mathcal{J}$ into equally-long subintervals) allows for a minimax risk of $1/k^2$. In that sense, above bounds allow to combine the  benefits of the helper information and the observations $X^n$. 

Notice further that for moderate or large values of $k$ and $n$, the upper bounds in Theorem~\ref{thm:2} strictly improve over above simple strategies.
\end{remark} 

\section{Achievability Proof (Proof of Theorem~\ref{thm:2})}

\subsection{Scheme}

Fix a length 
\begin{equation} \label{eq:L}
L := \sqrt{\frac{\ln(nk^2)}{n}}\frac m{(k-\frac12)\frac F2-\frac D4}. 
\end{equation} 
As $k>\frac{D+F}{2F}$, we have $L>0$. Partition the parameter interval  $\mathcal{J}$ into disjoint subintervals of length $L$, 
where the last subinterval  might be smaller. Denote these intervals by $I_1,...,I_{N}$, for 
\begin{equation}
N:=\left \lceil \frac{|\mathcal{J}|}{L} \right \rceil,
\end{equation} and the mid-points of the subintervals by  $\theta^{\mi}_1,\ldots, \theta^{\mi}_N$. 


We propose the following strategy for the helper and the estimator. 

\textit{Helper:}  If $\theta\in I_j$, the  helper declares  
\begin{equation}\label{eq:help} M =  j\mod k.
\end{equation} 
\textit{Estimator:} Upon  observing  a sequence $X^n$ of  type $\pi_{X^n}$ and  helper message $M$, the estimator declares the mid-point $\theta^{\mi}_j$ that among all the mid-points indicated by the helper message is closest to $\pi_{X^n}$. I.e., it sets 
\begin{equation}\label{eq:cond}
j^*= \argmin_{ j \colon  (j \mod k = M)} \| \pi_{X^n} - P_{\theta^{\mi}_j}\|_1\end{equation}
and declares 
\begin{equation} \label{eq:estim}
\hat{\theta} = \theta^{\mi}_{j^*}.
\end{equation}
If above the argmin is not unique, pick one of the minimizers at random.

In the following subsection, we analyze this scheme and show that it achieves the performances in Theorem~\ref{thm:2}. In fact, for the Bernoulli-$\theta$ result, we will need to modify the choice of $L$ in \eqref{eq:L}.


Figure~\ref{fig:perf} plots the minimax risk $R_{n,k}$ achieved by the proposed helper and estimator strategy for $L=2\sqrt{\ln(nk^2)/(nk^2)}$, $\mathcal J = [0,1]$, $P_\theta = (\theta, \theta-\theta^2,(1-\theta)^2)$ This numerical evaluation confirms that  $R_{n,k}$, when seen as a function of  $k$, is nearly proportional to $\frac{1}{k^2}$ (because the slope of the line in the logspace is close to $-2$). Numerical evaluation thus suggests that $R_{n,k}$ has a decay rate as suggested by our lower bound in Theorem~\ref{thm:1} (a little worse than $\frac{1}{k^2}$, the better the higher $n$ gets). The same numerical evaluation also suggests that  $R_{n,k}$ decays  proportionally to $\frac{1}{n}$, again as suggested by Theorem~\ref{thm:1}.
\begin{figure}[h!]
    \centering
    \includegraphics[width=1\linewidth]{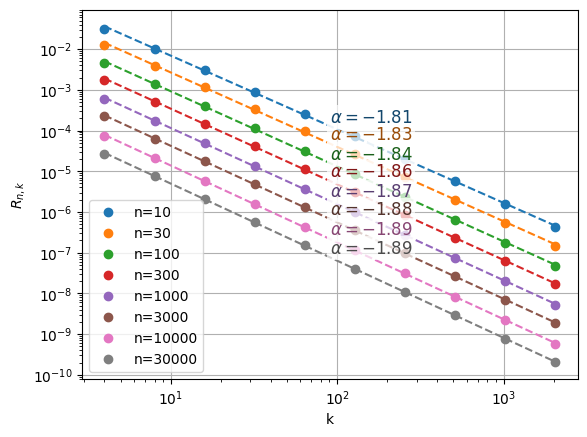}
    \caption{Minimax risk $R_{n,k}$ of the estimator described in~\eqref{eq:help}--\eqref{eq:estim} for varying $n$ and $k$.}
    \label{fig:perf}
\end{figure}

\subsection{Analysis}

To upper bound the expected mean squared error of our proposed strategy $(g^{(n,k)}, \phi^{(n,k)})$, we start by noting that for an  arbitrary parameter $\theta\in\mathcal{J}$, if 
\begin{equation} 
\| \boldsymbol{\pi}_{X^n} - P_{\theta}\|_1 < L \frac{k-\frac12 }{2} F-\frac{LD}{4},
\end{equation} 
then $j^*$ (as defined in Eq.~\eqref{eq:cond}) is the correct interval containing $\theta$ and thus $|\theta_{j^*}^{\mi}-\theta|\leq L/2$. Indeed, for any $j \neq j^*$ such that $j\equiv j^*\mod k$:
\begin{IEEEeqnarray}{rCL}
    \lefteqn{\| \boldsymbol{\pi}_{X^n} - P_{\theta_{j^*}^{\text{mid}}}\|_1 }\nonumber \\
    & \leq & \| \boldsymbol{\pi}_{X^n} - P_{\theta}\|_1+\|  P_{\theta_{j^*}^{\text{mid}}} - P_{\theta}\|_1\\
    & < & L \frac{k-\frac12 }{2} F-\frac{LD}{4}+\frac{LD}2\\
    & = &LF\left(k-\frac12\right)-\left( L \frac{k-\frac12 }{2} F-\frac{LD}{4}\right)\\
    & < & \| P_{\theta} - P_{\theta_j^{\text{mid}}}\|_1-\| \boldsymbol{\pi}_{X^n} - P_{\theta}\|_1 \\
    &\leq & \| \boldsymbol{\pi}_{X^n} - P_{\theta_{j}^{\text{mid}}}\|_1 .
\end{IEEEeqnarray}

We can then write:
\begin{IEEEeqnarray}{rCl}
 \lefteqn{ \mathbb E_{P_\theta}\left[|\hat\theta-\theta|^2\right] } \nonumber\\
    & \leq & P_\theta\left(\|\pi_{X^n}-\theta\|_1> L\frac{k-\frac12}{2} F-\frac{LD}{4}\right)|\mathcal{J}|^2\nonumber\\
    && +P_\theta\left(\|\pi_{X^n}-\theta\|_1\leq L\frac{k-\frac12}{2} F-\frac{LD}{4}\right)\frac{L^2}{4}\\
    & \leq &  8m e^{-\frac{nL^2}{m^2}( (k-1/2) F/2-D/4)^2} |\mathcal{J}|^2+\frac{L^2}{4}\\
    &= &\frac{8m|\mathcal J|^2}{nk^2}+ \frac{m^2\ln(nk^2)}{n((k-1/2)F-D/2)^2},
    \end{IEEEeqnarray}
    where the second inequality holds by  bounding probabilities either by $1$ or as in \cite[Remark to Lemma 2.12]{csiszar_book} and the last equality by the choice of $L$ in \eqref{eq:L}. 
\medskip

\textit{Refined Analysis for the Bernoulli-$\theta$ example leading to \eqref{eq:U2}:}

Modify the length $L$ in the scheme to
\begin{equation} \label{eq:L2}
L=\sqrt{ \frac{2\ln(k)}{n}}(k/2-1) ^{-1},
\end{equation} 
which is positive for $k>2$.
Notice that for this binary example: 
\begin{equation} 
| \boldsymbol{\pi}_{X^n}(0)  - (1- \theta)| = | \boldsymbol{\pi}_{X^n}(1)  -  \theta|,
\end{equation}
and thus $\delta$-typicality as defined in \cite{csiszar_book} is equivalent to the condition $| \boldsymbol{\pi}_{X^n}(1)  -  \theta|\leq \delta$.

To upper bound the expected mean squared error of our proposed strategy $(g^{(n,k)}, \phi^{(n,k)})$, we start by noting that for an  arbitrary parameter $\theta\in\mathcal{J}$, if 
\begin{equation} 
| \boldsymbol{\pi}_{X^n}(1) - \theta| \leq L \frac{k-2 }{2},
\end{equation} 
then $j^*$ is the correct interval containing $\theta$ and thus $|\theta_{j^*}^{\mi}-\theta|\leq L/2$. In a similar way, for $i=1,2,\ldots$, if 
\begin{IEEEeqnarray}{rCl}
| \boldsymbol{\pi}_{X^n}(1) - \theta| & \in& \left( L\frac{k-2}{2}+ (i-1)L k, \; \right.\nonumber \\
&& \qquad  \left. L\frac{k-2}{2}+ iL k\right],
\end{IEEEeqnarray}
then $j^*$ represents the interval with $g(j^*) = g(\theta)$ that is at most $i$-th closest  to the interval ${I}_{g(\theta)}$,   and with our strategy the squared error is bounded by
\begin{equation} 
|\hat{\theta}-\theta|^2 \leq \left( \frac{L}{2} + i L k\right)^2.
\end{equation}
\allowdisplaybreaks[4]
We can then write, with $|\pi_{X^n}(1)-\theta| =d_\theta$:
\begin{IEEEeqnarray}{rCl}
   \lefteqn{ \mathbb E_{P_\theta}\left[|\hat\theta-\theta|^2\right]} \ \nonumber \\ 
    & \leq &P_\theta\left(d_\theta\leq L \frac{k-2 }{2}\right)\frac{L^2}{4} \nonumber \\
    & &+2 \sum_{i=1}^{N-1} P_{\theta}\Big(d_\theta \in\Big(L\frac{k-2}{2} +(i-1) Lk ,\nonumber \\
    &&  \hspace{2.3cm} L\frac{k-2}{2} + i L k \Big]\Big) \left( \frac{L}{2} + i L k\right)^2  \IEEEeqnarraynumspace\\
    & \stackrel{(a)}{\leq}& \frac{L^2}{4}   + 2 P_\theta\Big(d_\theta > L\frac{k-2}{2}\Big)\left( \frac{L}{2} +  L k\right)^2\nonumber \\
    && \hspace{0cm} +2\sum_{i=2}^{N-1}P_\theta\Big(d_\theta > L\frac{k-2}{2}+(i-1) Lk\Big) \nonumber \\
    && \hspace{0.8cm}\times\underbrace{ \left( \left(\frac{L}{2}+  iL{k}\right)^2 - \left(\frac{L}{2}+  (i-1)L{k}\right)^2  \right)}_{ =L^2k(1+(2i-1)k)}\IEEEeqnarraynumspace \\
    & \stackrel{(b)}{\leq}&  \frac{L^2}{4}+32me^{-nL^2(k/2-1)^2}(Lk)^2  \nonumber\\
    &&+16mL^2 k \sum_{i=2}^{N-2} (1-k+2ik)e^{-n L^2  (\frac k2-1+(i-1)k)^2}   \\
      &  \stackrel{(c)}{\leq}&  65L^2+ 16mL^2 k\sum_{i=2}^{N-2} (2ik)e^{-n L^2  (i-1)^2k^2}  \nonumber  \\
      &&+16mL^2k\times 2k e^{-nL^2(k/2-1)^2}\\
          &  \stackrel{(d)}{\leq}&   65L^2\hspace{-1mm}+\hspace{-1mm}32mL^2\left(1+ k^2\sum_{i=2}^{\infty} i e^{-n L^2  (i-1)^2k^2}\right)
      \label{eq:d}
\end{IEEEeqnarray}
where we notice that the infinite sum in \eqref{eq:d}  converges because $e^{-n L^2 k^2 }<1$.
In above sequence of inequalities:
\begin{itemize} 
\item $(a)$ holds by writing  $P_{\theta}(|\pi_{X^n}(1)- \theta| \in(a,b])=P_{\theta}(|\pi_{X^n}(1)- \theta| >a)- P_{\theta}(|\pi_{X^n}(1)- \theta|>b)$, rearranging the terms in the sum and dropping negative terms; 
\item $(b)$ holds by bounding the probabilities using  \cite[Remark to Lemma 2.12]{csiszar_book};
\item $(c)$  holds by using the value of $L$ to simplify the first exponential term,  by  getting $i=1$ out of the sum,  and by dropping for the other summands the negative term $(1-k)$ as well as the positive term $k/2-1$ in the exponent;
\item $(d)$ holds by adding nonnegative terms to the sum, and computing the case $i=1$.
\end{itemize}

Plugging  the choice of $L$ in \eqref{eq:L2},  into  (\ref{eq:d}) yields:
\begin{IEEEeqnarray}{l}
 \lefteqn{\mathbb E_{P_\theta}\left[|\hat\theta-\theta|^2\right]}\nonumber\\ \leq  \frac{\ln(k^2)}{n(k/2-1)^{2}} \left(65+  32m(k^2A+1)\right),
\end{IEEEeqnarray}
where here $m=2$ and for $k\geq 2$, 
\begin{IEEEeqnarray}{rCl} 
A&=& \sum_{i=2}^{\infty} i e^{-n L^2  (i-1)^2k^2}<\sum_{\ell=0}^\infty (\ell+2) \left(\frac{1}{k^2}\right)^{(\ell+1)^2} \\&\leq&\frac1{k^2}\sum_{\ell=0}^\infty (\ell+2)\left(\frac{1}{4}\right)^{ \ell^2} \nonumber
< \frac{2.77}{k^2},
\end{IEEEeqnarray}
and can be bounded  by $3/k^2$.

\section{Converse Proof (Proof of Theorem~\ref{thm:1})}

Before presenting the proof, we make some definitions and state an auxiliary lemma. 
Let $\Ball{\boldsymbol{p}}{r}$ be the $\mathbb{L}_1$-ball of type-vectors $\boldsymbol{\pi}\in \mathcal{P}_n$ of radius $r$ around a probability vector $\boldsymbol{p} \in \mathcal{P}$: 
\begin{equation}
\Ball{\bp}{r} \triangleq \{\bpi\in\mathcal{P}_n \colon\;  \| \bp- \bpi \|_1 \ \leq r\}
\end{equation}
Let  $\mathcal{I}$ be any subinterval of the parameter space  $ \mathcal{J}$ of length $|\mathcal{I}|=n^{-1/2}$  with 
\begin{equation}\label{eq:p0}
 p_0\triangleq    
\min_{\theta\in\mathcal{I}}\min_{x\in \mathcal{X}} P_{\theta}(x) > 0.
\end{equation}
Define the region of types
\begin{equation}
\RI = \bigcup_{\theta' \in \mathcal{I}}\ \Ball{P_{\theta'}}{ \frac1{\sqrt n}} . 
\end{equation}


We then have the following lemma:
\begin{lemma}\label{lemma1}
\begin{enumerate}
\item For sufficiently large $n$, the number of $n$-types in  region $\RI$ is \begin{equation}
    |\RI | \geq  (m-1)^{-(m-1)}\cdot \sqrt n^{m-1}.
\end{equation}
\item For all $\theta'\in \I$ and $\bpi \in \RI$:
    \begin{equation}
         P_{\theta'}(\bpi)\geq \frac{C}{\sqrt{n}^{m-1}}=:P_{\min} ,
    \end{equation}
    where $C=e^{-(1+D)^2 /p_0-1}m^{\frac m2}\frac1{(2\pi)^{\frac{m-1}2}}$

\end{enumerate}
\end{lemma}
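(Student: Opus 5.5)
For part 1, I would drop the union and use only the ball around one point. Since $\RI\supseteq \Ball{P_{\theta'}}{1/\sqrt n}$ for any fixed $\theta'\in\I$, it suffices to count the $n$-types in a single $\mathbb{L}_1$-ball of radius $1/\sqrt n$ around $\bp=P_{\theta'}$. Here $\bp$ lies in the interior of the simplex, with every coordinate at least $p_0$.

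To count, I would parametrize types by their first $m-1$ coordinates $\pi(x)=a_x/n$, with $a_x$ integers and $\pi(m)$ determined. Consider all choices with $|a_x/n - p(x)|\le \frac{1}{2(m-1)\sqrt n}$ for $x=1,\dots,m-1$. Then the $L_1$ deviation on the first $m-1$ coordinates is at most $\frac{1}{2\sqrt n}$. The last coordinate deviates by at most the same amount, so the total is at most $1/\sqrt n$.

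For $n$ large these deviations are small compared with $p_0$, so all coordinates stay positive and the types are valid. Each coordinate then has at least about $\frac{\sqrt n}{m-1}$ admissible integers: an interval of length $\frac{n}{(m-1)\sqrt n}$ contains at least $\lfloor\cdot\rfloor$ integers, and the rounding is absorbed for $n$ large. This gives $|\RI|\ge \big(\frac{\sqrt n}{m-1}\big)^{m-1}$. A slightly looser constant is harmless, provided it still meets the stated one.

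For part 2, I would use the exact type-class probability together with Stirling bounds. We have $P_{\theta'}(\bpi)=\binom{n}{n\pi(1),\dots,n\pi(m)}\prod_x P_{\theta'}(x)^{n\pi(x)}$. Applying Stirling with explicit error bounds ($\sqrt{2\pi k}(k/e)^k\le k!\le e\sqrt{k}(k/e)^k$ or similar) gives
\[
P_{\theta'}(\bpi)\ \ge\ c_m\, n^{-(m-1)/2}\Big(\prod_x \pi(x)\Big)^{-1/2} e^{-nD(\bpi\|P_{\theta'})}.
\]
Since $\prod_x\pi(x)\le m^{-m}$, the middle factor is at least $m^{m/2}$, which explains the $m^{m/2}$ in $C$. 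The remaining Stirling constants supply $(2\pi)^{-(m-1)/2}$ and a factor $e^{-1}$.

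Next I would bound the divergence by the $\chi^2$ bound, $D(\bpi\|\bp)\le \sum_x (\pi(x)-p(x))^2/p(x)\le \|\bpi-\bp\|_1^2/p_0$. Take $\bpi\in\Ball{P_{\theta''}}{1/\sqrt n}$ with $\theta''\in\I$. Since $|\theta'-\theta''|\le|\I|=n^{-1/2}$, Assumption~1 gives $\|P_{\theta'}-P_{\theta''}\|_1\le D/\sqrt n$ for $n$ large. The triangle inequality then yields $\|\bpi-P_{\theta'}\|_1\le (1+D)/\sqrt n$. Hence $nD(\bpi\|P_{\theta'})\le (1+D)^2/p_0$, matching the exponent in $C$.

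The main obstacle is the constant bookkeeping in the Stirling step: getting exactly $e^{-1}(2\pi)^{-(m-1)/2}$ rather than some other constant. I would use the lower bound $n!\ge\sqrt{2\pi n}(n/e)^n$ for the numerator and $k!\le \sqrt{2\pi k}(k/e)^k e^{1/(12k)}$ for each of the $m$ denominator factorials. The correction factor $\prod_x e^{1/(12n\pi(x))}$ tends to $1$ because $\pi(x)\ge p_0/2$. It is therefore at most $e$ for $n$ large, which produces the $e^{-1}$.
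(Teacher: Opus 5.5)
Your proposal is correct and is essentially the paper's proof. Part~1 keeps a single ball $\Ball{P_{\theta'}}{1/\sqrt n}\subseteq\RI$ and counts its types, and part~2 combines the exact multinomial formula, the Stirling--Robbins bounds (correction factor at least $e^{-1}$ for large $n$), AM--GM for the factor $m^{m/2}$, and $\log(1+x)\le x$ (i.e.\ $D\le\chi^2$) with the triangle inequality and Assumption~1, exactly as in the appendix. One caveat, which the paper's one-line count shares: the floor is not literally ``absorbed,'' since $\lfloor\sqrt n/(m-1)\rfloor^{m-1}<(\sqrt n/(m-1))^{m-1}$ whenever $\sqrt n/(m-1)\notin\mathbb{Z}$, so your box gives the stated bound only up to a factor $1-o(1)$ (for $m\ge 3$ the exact constant can be recovered from the extra volume of the full $\mathbb{L}_1$-ball over your box, whereas for $m=2$ with $P_\theta$ constant on $\I$ it can genuinely fail), which only perturbs $C_{\low}$ in Theorem~\ref{thm:1}.
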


\begin{proof}See Appendix~\ref{app:lem1}. \end{proof}

We can now prove the desired lower bound in the theorem. Pick $\theta_1, \theta_2\in \mathcal{I}$ with same helper message, $g(\theta_1)=g(\theta_2)=j$ and distance:
\begin{equation} 
|\theta_1-\theta_2| \geq \frac{1}{k\sqrt{n}}.
\end{equation}
Notice that such parameter values must exist because the helper function takes  only $k$ possible values and because $\I$ is of length $1/\sqrt{n}$.

The desired lower bound is then obtained through a modified 2-point Le-Cam method where the analysis is restricted to the set $\RI$:
\begin{align}
   2 R_{k,n}^\star & \geq   \mathbb E\left[|\theta_1-\hat\theta_1|^2\right] +\mathbb E\left[|\theta_2-\hat\theta_2|^2\right] \\ 
   &\geq    \sum_{\bpi\in \RI} P_{\min}\cdot \Big(|\theta_1-\hat\theta(j,\bpi)|^2 \nonumber \\
   & \hspace{2cm} + |\theta_2-\hat\theta(j,\bpi)|^2\Big)\\
    & \geq \frac12 |\RI| P_{min}\cdot|\theta_1-\theta_2|^2\\
    & \geq \frac12(m-1)^{-m+1} C|\theta_1-\theta_2|^2 \\
    & \geq  \frac12(m-1)^{-m+1} C \cdot  \frac{1}{k^{2} n} .
\end{align}

    \section*{Acknowledgements}
The work of M. Wetterwald and M. Wigger has been supported by the ERC under grant agreement $101125691$.
\begin{appendices}
\section{Proof of Lemma~\ref{lemma1}}\label{app:lem1}

\textit{Proof of 1)}: 
By definition of  $\RI$:  
\begin{equation}
    \Ball{P_{{\theta}'}}{\frac1{\sqrt n}} \subseteq \RI
    \label{eq:balls}
\end{equation}
Moreover, by Assumption \eqref{eq:p0}, 
for sufficiently large $n$, the number of $n$-types in   ball $\Ball{P_{\theta'}}{r}$ is  lower bounded by $(r/(m-1))^{m-1}$. Setting $r=n^{-1/2}$, this concludes the proof of 1). 


\textit{Proof of 2):} For any $\theta \in \I$ and type $\bpi \in \mathcal{R}_I$: 
\begin{equation}  P_{\theta}(\bpi) = \begin{pmatrix}
        n \\ n \bpi(1),...,n\bpi(m)
    \end{pmatrix}\prod_{i\in \mathcal{X}}P_{\theta}(i) ^{n\bpi(i)}.\end{equation}
    We use Stirling's  and Herbert Robbins' bounds to write
    \begin{equation}
        e^{\frac1{12n+1}}\frac{n^n}{e^n}\sqrt{2\pi n}<n!<e^{\frac1{12n}}\frac{n^n}{e^n}\sqrt{2\pi n}
    \end{equation}
 and thus to conclude:
    \begin{align}
        P_{\theta}(\bpi) &\geq \frac{e^{\frac1{12 n+1}}}{\prod_{i=1}^me^{\frac1{12 n\bpi(i)}}}\cdot \frac{ n^ n}{\prod_{i=1}^m ( n\bpi(i))^{ n\bpi(i)}} \nonumber \\
        &\qquad \cdot  \frac{\sqrt{2\pi  n}}{\prod_{i=1}^m\sqrt{2\pi  n\bpi(i)}}\prod_{i=1}^mP_{\theta}(i) ^{n\bpi(i)} \\
        &\geq \frac{e^{\frac1{12 n+1}}}{ e^{\frac{m}{12 (np_0-\sqrt{n})}}}\cdot \frac{1}{\prod_{i=1}^m \left(\frac{ \bpi(i)}{ P_{\theta}(i)}\right)^{ n \bpi(i)}} \nonumber \\
        & \qquad \cdot \frac{1}{\sqrt{2\pi}^{m-1}}\cdot \sqrt{\frac{ n}{\prod_{i=1}^m n \bpi(i)}}
    \end{align}
where we used that  $\sum_i \bpi(i) = 1$ and 
\begin{equation} 
\frac{e^{\frac1{12 n+1}}}{\prod_{i=1}^me^{\frac1{12 n\bpi_i}}} \geq \frac{e^{\frac1{12 n+1}}}{ e^{\frac{m}{12 (np_0-\sqrt{n})}}}
.
\end{equation} 
Moreover,  since the geometric mean is upper bounded by the arithmetic mean, 
\begin{equation} \prod_{i=1}^m \bpi(i)\leq \left(\frac{\sum_{i=1}^m \bpi(i)}{m}\right)^m = m^{-m} \end{equation} and thus :
    \begin{equation}
        \sqrt{\frac{ n}{\prod_{i=1}^m n \bpi(i)}} \geq  n^{-\frac{m-1}2} m^{\frac m2}.
    \end{equation}
    By the triangle inequality and Assumption 1, 
    we  have for any $\theta\in \mathcal{I}$ and $\bpi \in \RI$:
\begin{IEEEeqnarray}{rCl}
\| {\bpi} - P_{{\theta}} \|_1 & \leq &\| {\bpi} - P_{\theta'} \|_1 +\|  P_{\theta'} -P_{{\theta}} \|_1\\
& \leq &
 \frac{1}{\sqrt{n}} + D| \theta' -{\theta}| \\
 & \leq&  \frac{1+ D }{\sqrt{n}},\label{eq:bound}
 \end{IEEEeqnarray}
 where $\theta'$ is the parameter so that $\bpi \in \Ball{P_{\theta'}}{1/\sqrt{n}}$. 

 We therefore have:   \begin{IEEEeqnarray}{l}
       \lefteqn{ \log \left(\prod_{i=1}^m \left(\frac{ \bpi(i)}{ P_{\theta}(i)}\right)^{ n \bpi(i)}\right) } \nonumber \\=\sum_{i=1}^m  n \bpi(i) \log\left(\frac{ \bpi(i)}{ P_{\theta}(i)}\right)\\
         = \sum_{i=1}^m  n \bpi(i) \log\left(1+\frac{ \bpi(i)- P_{\theta}(i)}{ P_{\theta}(i)}\right)\\
        \leq \sum_{i=1}^m  n(( \bpi(i)- P_{\theta}(i))+ P_{\theta}(i))\frac{ \bpi(i)- P_{\theta}(i)}{ P_{\theta}(i)}\\
         =  n\sum_{i=1}^m( \bpi(i)- P_{\theta}(i)) +  n\sum_{i=1}^m\frac{( \bpi(i) - P_{\theta}(i))^2}{ P_{\theta}(i)}\\
        \leq 0+ (1+ D)^2  /p_0.
    \end{IEEEeqnarray}
    
    Combining above observations, and noting that for $n$ such that $n\geq \frac{4}{p_0^2}$ and $n\geq \frac{m^2}{144}$, 
    \begin{equation} 
    \frac{e^{\frac1{12 n+1}}}{ e^{\frac{m}{12 (np_0-\sqrt{n})}}}
 \geq\frac1{e^{\frac{m}{12 \sqrt{n}}}}\geq e^{-1},
    \end{equation} 
    we obtain that for sufficiently large values of $n$:
    \begin{equation}
         P_{\theta'}(\bpi)\geq e^{E-1}m^{\frac m2}\frac1{(2\pi)^{\frac{m-1}2}}n^{-\frac{m-1}2}
    \end{equation}
    for $E=-(1+D)^2 /p_0.$ This proves part 2). 

\end{appendices}
\bibliographystyle{ieeetr}
\bibliography{our_bib}
    
 \end{document}